\let\epsilon\varepsilon
\let\phi\varphi
\let\epsilon\varepsilon
\newtheorem*{lemma*}{Lemma}
\newtheorem{theorem}{Theorem}
\newtheorem{lemma}{Lemma}
\newtheorem{definition}{Definition}
\begin{document}

\title{ Perfect  message authentication codes are robust to small deviations from uniform key distributions.
}

\author{
 \IEEEauthorblockN{Boris Ryabko\\}
 \IEEEauthorblockA{Federal Research Center for Information and Computational Technologies,
\\Novosibirsk State University, 
and  \\Siberian State University of Telecommunications and Informatics, 
\\
Novosibirsk\\}}

\date{}

\maketitle

\begin{center}

    \end{center}

Avstract. 
 We investigate the impact of (possible) deviations of the probability distribution of key values from a uniform distribution for the information-theoretic strong, or perfect,  message authentication code.   We found a simple expression for the decrease in security as a function of the statistical distance between the real key probability distribution and the uniform one.   In a sense, a perfect message authentication code is robust to small deviations from a uniform key distribution. 

\section{Introduction}

Perfect security is an important property of data protection systems, which has attracted the attention of cryptography researchers since C. Shannon described it in his famous paper \cite{sh}, where he also showed that the so-called one-time pad cipher possesses this property. Several perfectly secure cryptographic methods are currently known, which include the  information-theoretically secure    (i,e.        the perfectly secure) message authentication code (MAC), which is popular and well-studied, cf. \cite{bo,wk}.
This MAC, as well as the one-time pad,    use secret keys, i.e., binary words of a certain length, which must obey a uniform distribution.

It is worth noting that MACs  and related cryptographic primitives are under intensive development by many authors \cite{bo,wk,c1,c2,c3,c4}, and there are currently some differences in terminology.
In this article, all definitions are given according to \cite{bo}.

In this paper, we consider the problem when the actual probability distribution of key is different from uniform  one i.e., there is a small difference between the actual key distribution and the uniform distribution.
We found estimates of the reduction in MAC persistence with non-uniformly distributed keys by measuring the deviation using the so-called statistical distance, which is a popular measure of the difference of distributions \cite{bo}. 
The estimate found is, in a sense, a generalization of the case of uniformly distributed keys.

To the best of our knowledge, this problem for MAC has not been solved yet, and the results presented are new.

One informal observation is that the perfect security property of MAC is robust to small deviations from key randomness.

The rest of the paper is as follows. Part 2 contains a general definition of MAC and a description of perfect, or information-theoretically secure,  MAC. The third part is devoted to the description of the so-called statistical distance and its some properties, while the fourth part contains the proof of the main results about MAC security for the case when the key probability distribution is slightly different from the uniform distribution. 

\section{ Message authentication codes (MAC)}

\subsection{ General description of MAC}

We consider two participants, Alice and Bob, connected by a communication line, and Alice sends messages to Bob from time to time.
A third participant, Eve, can spoof (distort) these messages, and the problem Alice and Bob are considering here is the so-called message integrity check: Bob receives message $m$ from Alice and wants to make sure that the message has not been altered by Eve during transmission. This is a message authentication problem, and a common scheme for solving it is as follows:
Alice computes a short message authentication code (tag) $t$ that allows Bob to verify that the message came from Alice (authentication check) and was not tampered with by Eve (integrity check). More precisely, Alice sends Bob a pair $(m,t)$. Upon receiving this pair, Bob checks $t$ according to a certain algorithm and rejects the message if $t$ fails the check. If $t$ passes the check, Bob is sure that the message came from Alice and was not changed during transmission.

We also assume that Alice and Bob share a secret key, which is used by Alice in computing $t$ and in verifying the integrity of the message by Bob.

\begin{definition}\label{th:one} 
 A MAC system    $I=(S,V)$  is a pair of  algorithms $S$   and $V$, where $S$ is called the algorithm for computing $t$ and $V$ is called the algorithm for verifying message integrity, i.e., the algorithm $S$ is used to generate authentication code messages and the algorithm $V$ is used to verify $(m,t)$.
The values $m,k,t$ refer to the sets $M,K,T$ respectively. 
It is 
 assumed that $S$ is a probabilistic algorithm   $t:= S(k,m)$, where $k$ is the authentication key,  $m$ is the message, and $t$ is the authentication code (tag).    
  $V$ is an algorithm that is denoted as $ r : = V(k,m) $, where $r$ is a Boolean variable that takes the values “accept” or “reject”. It is required that the tags generated by $S$  are always accepted by $V$, i.e., MAC
must guarantee that for all keys $k$  and all messages $m$,  $ Pr\{  V(k,m, S(k,m) = "accept" \} = 1$.
Thus, we consider a system in which the verification method is defined as    $   V(k,m,t)  = $  “accept”  if 
$  S(k,m)=t $, and “reject” otherwise.
\end{definition}

\subsection{ MAC security}

The MAC security assessment is based on the following attack game:

\begin{definition}\label{Attack Game} 
Attack Game.
The game involves two participants, called challenger $C$ and adversary $A$.
Let there be MAC system $I = (S; V )$,      $m \in M,  k \in K,  t \in T.$
  The attack game runs as follows:

 1. The challenger picks a random $k$ from $ K$.

2. Adversary $A$ generates message $a$ and sends it to challenger $C$.
$C$ computes $t_0 = S(k,a)$ and sends it to $A$.

 3.   $A$ outputs a candidate forgery pair $(b; t_1^A) $, $b\ne a$,    and sends $b,t_1^A$ to $C$

 $A$ wins this game if    $ V (k; b; t_1^A) =$ accept and 
the advantage of $A$ with respect to $I$ is given  as follows:
${ \it Adv}[A,I] = Pr\{A \, wins\}$. 
\end{definition}
Note that this definition applies to the perfect  MAC, where the secret key $k$ is used only once.


\subsection{ Perfect  MAC}\label{pef}
Here we describe the perfect MAC from \cite{bo}. 
Let  $p$ be a prime number and $Z_p$ be the ring of integers modulo $ p$ (note that $|Z_p|=p$).
We consider the set of messages $M = Z_p^l$, where $l$ is an integer equals  the maximal  length of the  messages, the set of the keys  $K$ is $Z_p^2$, that is any $k$ can be presented as $k = (k_1, k_2)$, where 
$k_1,k_2 \in Z_p$.

Let us consider a message $m = (a_1,a_2, ... , a_\nu),    a_i  \in Z_p, i=1,...,\nu$,  $\nu \le l$. 
The MAC is defined as follows:
\begin{equation}\label{mac} 
S( (k_1,k_2), m) =     ( k_1^{l+1}    + a_1 k_1^{v} + a_2 k_1^{v-1} + ... + a_v  k_1) + k_2       \, ,         
\end{equation}
where $S$   is   the algorithm for computing $t$ and    
 $   V(k,m,t)  = $  “accept”  if 
$  S(k,m)=t $, and “reject” otherwise, see the definition 1  (In (\ref{mac}) all operations over the ring $Z_p$). 
We denote this MAC as $ {\it{\bf P}}$ .

It is proven in \cite{bo} that for all
adversaries $A$  (even inefficient ones) 
\begin{equation}\label{eps} 
 {\it Adv}[A,{\bf P}]  \le (l+1)/ |Z_p| \, .
\end{equation}  
(Recall that  the key $(k_1, k_2)$ 
is uniformly distributed over  $Z_p^2$.)

Here we consider a case where the set of keys and messages are based on the set $Z_p$, but the considered perfect MAC can be based on differen sets.  For example, the other popular  basis is the set of remainders after division by an irreducible polynomial \cite{bo}.

\section{Some properties of the statistical distance }

Statistical distance is one of the most popular measures of deviation of probability distributions \cite{bo}. Its definition is as follows.
\begin{definition}\label{dist} 
Let there be a set $A = \{a_1, ... , n\}, n \ge 1$,  and some probability distributions $P$ and $Q$ on $A$.
Then the statistical distance $\Delta$ is as follows
\begin{equation}\label{dis}
\Delta(P,Q) =  \frac{1} {2} \sum_{i=1}^n | P(a_i) - Q(a_i) | \, .
\end{equation}
\end{definition}

Let $U$ be the uniform distribution, i.e.  $U(a) = 1/n$ for any $a \in A$. 
The next task will be interesting later. Let $P$ be the probability distribution on $A$ and $\Delta(P,U) = \delta$. 
Suppose that $s$ is an integer, $1 \le s < n$, and define 

\begin{equation}\label{pmd}
 P_{max}^s (P,\delta)=  \max_{ \{P: \Delta(P,U) = \delta \} } \,\, \sum_{i=1}^s P(a_i) \,. 
\end{equation}
($ \max  $ here exists, because it is easy to see that    $ \{P: \Delta(P,U) = \delta \} $ is compact set.)
The following statement provides a concise expression for $ P_{max}^s .$
\begin{lemma}\label{sum}  

     \begin{equation}\label{pm}
 P_{max}^s  (P,\delta)=
  \begin{cases}
   \delta + s/n      & \quad \text{if } s\le n(1-\delta)\\
   1  & \quad \text{if }   s > n(1-\delta)\\  \end{cases}
\end{equation}
\end{lemma}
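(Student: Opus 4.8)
The plan is to read \eqref{pmd} as a constrained optimisation problem — maximise the linear functional $\sum_{i=1}^s P(a_i)$ over all probability vectors subject to the single constraint $\Delta(P,U)=\delta$ — and to prove the formula by matching a two-sided bound against explicit feasible distributions. Throughout I would abbreviate the deviations by $d_i = P(a_i)-1/n$, so that $\sum_{i=1}^n d_i = 0$ and $\sum_{i=1}^n |d_i| = 2\delta$.

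First I would establish the upper bound. The naive estimate $\sum_{i=1}^s d_i \le \sum_{i=1}^s |d_i| \le \sum_{i=1}^n |d_i| = 2\delta$ only gives $s/n+2\delta$, which is too weak. The decisive step is to use the zero–sum relation $\sum_{i=1}^s d_i = -\sum_{i=s+1}^n d_i$ and bound the common quantity twice, once by each group of indices, and then average: $2\sum_{i=1}^s d_i = \sum_{i=1}^s d_i - \sum_{i=s+1}^n d_i \le \sum_{i=1}^s |d_i| + \sum_{i=s+1}^n |d_i| = 2\delta$, whence $\sum_{i=1}^s P(a_i) \le s/n + \delta$. Combined with the trivial bound $\sum_{i=1}^s P(a_i)\le 1$ this yields $P_{max}^s \le \min\{s/n+\delta,\,1\}$, which already reproduces the two branches of \eqref{pm}, the switch occurring exactly at $s = n(1-\delta)$.

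Then I would exhibit distributions attaining each bound. In the regime $s\le n(1-\delta)$, take $P(a_i)=1/n+\delta/s$ for $i\le s$ and $P(a_i)=1/n-\delta/(n-s)$ for $i>s$; one checks that the total mass is $1$, that all entries are nonnegative precisely because $s\le n(1-\delta)$ forces $\delta\le 1-s/n$, that $\Delta(P,U)=\delta$, and that the objective equals $s/n+\delta$. In the regime $s> n(1-\delta)$ (which can only occur for $s\ge 2$, since $\delta$ never exceeds the maximal distance $(n-1)/n$), I would put all the mass on the first $s$ atoms via the one-parameter family $P(a_1)=1-(s-1)q$, $P(a_2)=\dots=P(a_s)=q$, $P(a_i)=0$ for $i>s$, whose distance is the affine function $(n-1)/n-(s-1)q$ of $q$. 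Choosing $q=\bigl((n-1)/n-\delta\bigr)/(s-1)$ makes the distance equal to $\delta$ while the objective stays equal to $1$; this $q$ lies in $[0,1/n]$ exactly when $(n-s)/n\le \delta\le (n-1)/n$, and both inequalities hold in this regime.

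The main obstacle is the upper bound: obtaining the sharp constant $\delta$ rather than $2\delta$ is only possible by exploiting the vanishing of $\sum_i d_i$, splitting the estimate across the two index blocks and averaging — a routine triangle inequality is not enough. A secondary, more bookkeeping, difficulty is the case analysis: one must verify that the nonnegativity constraints in the first construction and the admissible range of $q$ in the second are both governed by the single threshold $s = n(1-\delta)$, so that the two branches of \eqref{pm} agree at that value and the formula is continuous in $s$.
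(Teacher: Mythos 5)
Your proof is correct, but it takes a genuinely different route from the paper. The paper argues by rearrangement: starting from a maximizing $P$ (existence via compactness), it applies distance-preserving mass-moving transformations $T_+$, $T_-$, $T_o$ that consolidate all excess mass above $1/n$ onto a single atom and all deficit onto as few atoms as possible, arriving at a canonical distribution $P^{final}$ with one atom of mass $\delta+1/n$, a block of atoms at $1/n$, one partial atom, and zeros, whose top-$s$ sum is then computed directly. You instead prove a sharp upper bound and match it: the zero-sum identity $\sum_{i=1}^s d_i=-\sum_{i=s+1}^n d_i$ gives $2\sum_{i=1}^s d_i\le\sum_{i=1}^n|d_i|=2\delta$, hence $P_{max}^s\le\min\{s/n+\delta,\,1\}$, and your two explicit families (excess $\delta/s$ spread over the first $s$ atoms; all mass on the first $s$ atoms with the parameter $q$ tuned so the distance equals $\delta$) attain both branches, with the threshold $s=n(1-\delta)$ emerging from the nonnegativity constraint and the admissible range of $q$ respectively. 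Your version buys rigor and economy: it sidesteps a step the paper leaves implicit, namely that the transformations never \emph{decrease} the objective $\sum_{i=1}^s P(a_i)$ (the paper only verifies they preserve $\Delta$, yet this monotonicity is needed for its argument to close), and your observation that $\delta\le(n-1)/n$ forces $s\ge2$ in the second regime, so the division by $s-1$ is legitimate, is a boundary detail the paper does not address at all. What the paper's approach buys in exchange is structural information: it exhibits the shape of an extremizer for every $s$ simultaneously, whereas your two constructions are tailored to each regime; incidentally, comparing your regime-one distribution (excess spread evenly) with the paper's $P^{final}$ (excess concentrated on one atom) shows the maximizer is far from unique.
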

\begin{proof}
Suppose that for some $\delta$ and a distribution $P$
 \begin{equation}\label{maks}
 P_{max}^s (P,\delta) =  \sum_{i=1}^s P(a_i) \,.
\end{equation}

Our goal is to prove (\ref{pm}).  First we define subsets $$A^+ = \{a: P(a) > 1/n \}, \, 
A^- = \{a: 0<P(a) <  1/n \}, 
$$ 
\begin{equation}\label{m}
A^{1/n} = \{a: P(a) = 1/n \} \,
A^0 = \{a: P(a)  = 0\} .
\end{equation}
Taking into account that $\sum_{a \in A} P(a) = 1,$ we can see that 
\begin{equation}\label{del}
 \sum_{a \in A^+} P (a) - |A^+ |/n = \delta
, \, 
\sum_{a \in A^-\cup A^0} P(a)   -   (| A^0| +  |A^-|)/n =  - \delta \, .
\end{equation}
For any  probability distribution $P$   we define three transformations $T_o, T_+$ and $T_-$ as follows:
$T_o(P) $ numbers  letters of $A$ in  such a way that $P(a_1) \ge P(a_2 \ge ... \ge P(a_n)$. 
$T_+(P)$ can be applied if $|A^+| > 1$. If it is so, then, first, $T_o$ is applied  and we obtain 
$P' = T_o(P)$, then we calculate $P^*(a_1) = P'(a_1) + (P'(a_2) - 1/n),  P^*(a_2) = 1/n,  $ 
$P^*(a_3) = P'(a_3), ..., P^*(a_n) = P'(a_n)$,
Note that the set $A^+$ decreases when $T_+$ is applied and $\Delta(P,U) = \Delta(P^*,U)$.

The transformation $T_-(P)$ can be applied if $|A^-| > 1$.   If it is so, take some $a_i, a_j \in A^-$ and let 
$P(i) = 1/n - \tau_1,  P(j) = 1/n - \tau_2,$ $\tau_1 \le \tau_2$.
Then we  define 
$$ P'(k) = P(k)   \quad    \quad \text{if } k \ne i, k \ne j,
$$
$$
P'(i) = 1/n, \, P'(j) = 1/n - (\tau_1+\tau_2)  \quad  \text{if } \,\, \tau_1 + \tau_2 \le 1/n
$$
 \begin{equation}\label{a-}
 P'(i) = 0, \, P'(j) = 2/n  -  (\tau_1+\tau_2)  \quad  \text{if } \,\, \tau_1 + \tau_2  > 1/n
\end{equation}
Then $T_-(P) = T_o(P')$. 
Note that the set $A^-$ decreases when $T_-$ is applied and $\Delta(P,U) = \Delta((T_-(P),U)$.

Now let us define the transformation $T_{ final }$ as follows: first apply the transformation $T_+$ until $|A^+|=1$, and then apply $T_-$ until $|A^-| \le1$ (Recall that the number of applications of $T_+$ and $T_-$ is finite, since after any of these applications the sets $A^+$ or $A^-$ are decreasing.). Let us denote $ T_{final}(P) = P^{final}$.

All transformations $T$ preserve the distance $\Delta$ and hence for any distribution $ \Delta(P^{ final },U) =
\Delta(P,U)  (=\delta )$.
The following properties are true for the distribution $P^{ final }$ by construction:
$|A^+| =1, $  $|A^-| \le1.$
From this, (\ref{del}) and $ \Delta(P^{ final },U) =\delta $ we can see that
$$
 P^{ final }(a_1) = \delta +1/n,  P^{ final }(a_2) =  ... = P^{ final }(a_{k-1} )= 1/n,
$$ $$
P^{ final }(a_k) = 1 - (\delta n - \lfloor \delta n \rfloor )/n ,
P^{ final }(a_{k+1} ) =  ...  = P^{ final }(a_{n} ) = 0, \,  k= n - \lfloor \delta n \rfloor \, .  
$$
The direct calculation of $\sum_{i=1}^s P^{ final }(a_{i})$ gives 
$$
    \sum_{i=1}^s P^{ final }(a_{i})  \,  =
  \begin{cases}
   \delta + s/n      & \quad \text{if } s\le n(1-\delta)\\
   1  & \quad \text{if }   s > n(1-\delta) . \\ \,\end{cases}
$$
As we have shown, the transformations 
 $T_-, T_+, T_o$ do not change distance. From this, 
the latest equation  and (\ref{maks}) we obtain  (\ref{pm}).
\end{proof}

\section{The effect of small deviations from uniform key distribution}
In the second part, we described a perfect  MAC 
 based on polynomial (\ref{mac}).
Here we consider the case when the secret keys $k_1, k_2$ obey distributions slightly different from the uniform distribution.
  
The following theorem is the main result of the paper.
\begin{theorem} \label{th}  
Let  the  MAC$\,\,   \bf{P}$   from   \ref{pef} 
be applied together with $k_1$, $k_2$ which are independent and obey such probability distributions $P_{k_1}$ and $P_{k_2}$ that $\Delta(P_{k_1},U) = \delta_1$, $\Delta(P_{k_2},U) = \delta_2$ for some
non-negative $\delta_1, \delta_2$.
Then
 \begin{equation}\label{eps2} 
 {\it Adv}[A,{\bf P}]  \le  
|Z_p|   P_{\max}^{l+1}(P_{k_1},\delta_1) P^1_{\max}(P_{k_2},\delta_2) \,.
\end{equation}
If $(l+1) \le |Z_p| (1-\delta_1)$ and $1 \le |Z_p| (1-\delta_2)$ then
\begin{equation}\label{eps3}
{\it Adv}[A,{\bf P}]  \le
|Z_p| ( \delta_1 + (l+1)/|Z_p| ) ( \delta_2+ 1/|Z_p|)\,.
\end{equation}  
\end{theorem}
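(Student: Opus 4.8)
The plan is to reduce the attack to a root-counting statement over the field $Z_p$ and then to apply Lemma \ref{sum} to each key coordinate separately. Since the claimed bound is information-theoretic and must hold for every (possibly inefficient, possibly randomized) adversary $A$, I would first condition on $A$'s internal coins and bound the conditional winning probability uniformly, so that it suffices to treat a \emph{deterministic} strategy. Then the queried message $a$ is fixed and, after receiving $t_0=S(k,a)$, the forgery is a pair $b=b(t_0)\neq a$, $t_1=t_1(t_0)$ depending only on the observed tag. Writing the MAC as $S((k_1,k_2),m)=f_m(k_1)+k_2$, where $f_m(x)=x^{l+1}+a_1x^{\nu}+\cdots+a_\nu x$ for $m=(a_1,\dots,a_\nu)$, I would record the two elementary facts that drive the argument: (i) for distinct messages $a\neq b$ the polynomial $f_b-f_a$ is nonzero (distinct messages yield distinct polynomials, as in \cite{bo}) of degree at most $l+1$, so it attains any prescribed value at no more than $l+1$ points of $Z_p$ — this is exactly the root count underlying the uniform bound (\ref{eps}); and (ii) for fixed $k_1$ the map $k_2\mapsto S((k_1,k_2),a)$ is a bijection of $Z_p$, so $k_2$ is uniquely pinned down once $k_1$ and $t_0$ are known.

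Next I would write the winning probability exactly. Conditioning on the observed value $t_0=\tau$, the forgery succeeds precisely when $(f_{b(\tau)}-f_a)(k_1)=t_1(\tau)-\tau$; by fact (i) the set $R_\tau=\{x\in Z_p:(f_{b(\tau)}-f_a)(x)=t_1(\tau)-\tau\}$ of admissible $k_1$ satisfies $|R_\tau|\le l+1$, and by fact (ii) each $x\in R_\tau$ determines the single key $(x,\tau-f_a(x))$. Using the independence of $k_1$ and $k_2$, the advantage is therefore exactly ${\it Adv}[A,\mathbf{P}]=\sum_{\tau\in Z_p}\sum_{x\in R_\tau}P_{k_1}(x)\,P_{k_2}(\tau-f_a(x))$.

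Then comes the decoupling, which I expect to be the delicate step. I would bound the $P_{k_2}$-factor pointwise by its largest value, using that $\max_y P_{k_2}(y)\le P^1_{\max}(P_{k_2},\delta_2)$ because $P_{k_2}$ is itself a distribution at statistical distance $\delta_2$ from $U$ and $P^1_{\max}$ is the maximal single-letter mass over all such distributions. Pulling this constant out leaves $\sum_\tau\sum_{x\in R_\tau}P_{k_1}(x)$; for each fixed $\tau$ the inner sum ranges over at most $l+1$ letters, hence is bounded by the sum of the $l+1$ largest values of $P_{k_1}$, which is at most $P_{\max}^{l+1}(P_{k_1},\delta_1)$ by the same reasoning applied to $P_{k_1}$. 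Summing this per-$\tau$ bound over the $|Z_p|$ values of $\tau$ produces the factor $|Z_p|$ and yields (\ref{eps2}). The main obstacle is precisely this step: one must resist bounding $P_{k_1}$ or the number of relevant $\tau$ naively, and instead exploit fact (ii) — that $k_2$ is pinned to a single value per pair $(x,\tau)$ — so that the sum over $\tau$ factors into $|Z_p|$ identical per-$\tau$ estimates with no double counting, leaving the clean product of one $P^{l+1}_{\max}$ and one $P^1_{\max}$.

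Finally I would deduce (\ref{eps3}) by substituting the closed form of Lemma \ref{sum} into (\ref{eps2}): under the hypotheses $l+1\le|Z_p|(1-\delta_1)$ and $1\le|Z_p|(1-\delta_2)$ the two maxima equal $\delta_1+(l+1)/|Z_p|$ and $\delta_2+1/|Z_p|$, giving the stated product. As a consistency check I would verify the case $\delta_1=\delta_2=0$, where the bound collapses to $(l+1)/|Z_p|$ and recovers (\ref{eps}); this simultaneously confirms that the leading factor $|Z_p|$ and the exponents $l+1$ and $1$ are the correct ones.
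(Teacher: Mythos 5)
Your proof is correct, and it rests on the same three ingredients as the paper's: the root count (at most $l+1$ admissible values of $k_1$ per term), the independence of $k_1$ and $k_2$, and Lemma~\ref{sum} applied with $s=l+1$ and $s=1$, summed over $|Z_p|$ terms. But the argument is organized genuinely differently. The paper splits the adversary into two strategies --- one that ignores $(a,t_0)$ and one that uses it --- decomposes the winning event in the first case over the value $\gamma$ of $k_2^C$ (so the relevant polynomial is $f(b,\cdot)-t_1^A+\gamma$, and nothing about $f_b-f_a$ is needed beyond $f_b$ being nonconstant), and then disposes of the second case by an informal ``repeat the proof with $Z_p$ replaced by $Z_{a,t_0}$'' step together with the counting claim $|Z_{a,t_0}|\le|Z_p|$. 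You instead reduce to deterministic adversaries and condition on the observed tag $t_0=\tau$, eliminating $k_2$ via $k_2=\tau-f_a(k_1)$; this yields a single exact formula $\sum_{\tau}\sum_{x\in R_\tau}P_{k_1}(x)\,P_{k_2}(\tau-f_a(x))$ valid for arbitrary adaptive forgery maps $b(\tau),t_1(\tau)$, so the paper's case distinction disappears and adaptivity --- the least rigorous point in the paper's treatment --- is handled automatically, with the per-$\tau$ bounds identical to the paper's per-$\gamma$ bounds. The price of your route is that you must invoke message-to-polynomial injectivity ($f_b-f_a\not\equiv 0$ for $b\ne a$), which you correctly flag and which holds for the length-dependent leading term $k_1^{\nu+1}$ in (\ref{fg}); without it $R_\tau$ could be all of $Z_p$ and the inner bound would fail. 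What your version buys is uniform coverage of adaptive adversaries and an exact pre-bound expression for the advantage, which incidentally substantiates the paper's closing Comment that knowledge of $(a,t_0)$ does not help the adversary; what the paper's case~1 buys is that it is marginally more elementary, needing no injectivity fact. Your deduction of (\ref{eps3}) from (\ref{eps2}) via Lemma~\ref{sum}, and the consistency check at $\delta_1=\delta_2=0$ recovering (\ref{eps}), match the paper.
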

\begin{proof}
For brevity,  for
 $a= (a_1, ... ,  a_\nu )$
let us define
\begin{equation}\label{fg}
 f(a,  k_1) = k_1^{v+1} +
 k_1 ( 
a_1 k_1^{v-1} + a_2 k_1^{v-2} + ... + a_v ) \, , \,\,\,g(a,  k_1, k_2) = f(a, k_1) +k_2.
\end{equation}
Let us describe the Attack Game from Part II for the MAC  $\,\,\bf{P}$.

Let $C$ choose randomly (according to the distributions 
$P_{k_1} , P_{k_2}$) and independently the keys $k^C_1, k^C_2$, and 
$A$ chooses some $a $ and sends $a$ to $C$.
Then $C$ computes $t_0 = g(a,k^C_1,k^C_2)$ and sends it to $A$.
Then $A$ chooses 
some $b$ and $ t_1^A$.
 Perhaps in doing so, $A$ ``guessed'' some keys $k_1^A, k_2^A$.

The adversary $A$ can have two variants or strategies of the game:

1) $A$ guesses 
 $b$ and $t_1^A$ without considering the presence of $a, t_0$.

2) $A$ considers the presence of $a, t_0$ when finding the pair $b, t_1^A$, i.e., it uses the key constraints imposed by $a, t_0$ when searching for $b$ and $t_1^A$.

Consider  the first option. 
$A$ somehow generates $b$ and $t_1^A$. It is easy to see that they correspond to some keys 
$k_1^A,k_2^A$ (Indeed, let us take any $k_1^A$, compute $f(b,k_1^A)$, and then define 
$k_2^A= t_1^A - f(b,k_1^A)$.  Obviously, $g(b,k_1^A, k_2^A)= t_1^A$, see (\ref{fg}).)
Let's proceed to estimate the probability of winning for $A$.  Define  $g(b, k_1^C,k_2^C) = t_1$. Then 
              $$ 
\forall b \in Z_p, \,\,
\forall k_1^A, k_2^A \, \text{for which}  \,\,g(b, k_1^A,k_2^A) = t_1^A \ : \,
$$ 
$$ 
Pr\{A \, \text{ wins } \, \} 
  = Pr\{t_1 = t_1^A \} = Pr\{ g(b, k_1^C,k_2^C) = t_1^A \}= 
$$ 
 \begin{equation}\label{forall}
Pr\{
\sum_{\gamma \in Z_p} f(b, k_1^C) = t_1^A - \gamma \, \, \& \,\,\, k_2^C = \gamma \}.
\end{equation}
Here and below we can assume that formally $Pr\{.\} = P_{k_1} \,P_{k_2}.$ 

Given that the events $ k_2^C = \gamma$ are incompatible at different $\gamma$ and $k_1^C, k_2^C$ are in
dependent, we obtain the following two equalities:  
$$     Pr\{
\sum_{\gamma \in Z_p} f(b, k_1^C) = t_1^A - \gamma  \, \, \& \,\, k_2^C = \gamma \} = \,\sum_{\gamma \in Z_p} Pr\{ f(b, k_1^C) = t_1^A - \gamma  \, \, \& \,\, k_2^C = \gamma \}   $$
 \begin{equation}\label{indep}
= \,\sum_{\gamma \in Z_p}\,    \,P_{k_1} \{ f(b, k_1^C) = t_1^A - \gamma  ) \} \,  P_{k_2} \{k_2^C = \gamma \}
\end{equation}
The latter amount can be represented as follows:
$$
       \sum_{\gamma \in Z_p}\,    \,P_{k_1} \{ f(b, k_1^C) = t_1 ^A- \gamma  ) \} \,  P_{k_2} \{k_2^C = \gamma \}
   =  $$
$$
\sum_{\gamma \in Z_p} P_{k_1} \{  k_1^C  \text{ polynomial root of}\,
f(b,k_1) - t_1^A + \gamma = 0   \}
  P_{k_2}\,  \{k_2^C = \gamma \} = 
$$ 
 \begin{equation}\label{zav} 
\sum_{\gamma \in Z_p} P_{k_1}   (     \sum_{ w \in \{ \text{ polynomial roots} 
f(b,k_1) - t_1^A + \gamma = 0 \}  }    f(b,w) - t_1^A+ \gamma = 0 \,)
 P_{k_2}\,  \{k_2^C = \gamma \}
\end{equation}
Given that the power of the polynomial $ f(b,w) - t_1^A+ \gamma$ does not exceed $l+1$ from 
the definiton of the maxsimum sums   (\ref{pmd}) and Lemma  1 (\ref{pm}), we get an estimate of the last sum:
$$ 
\sum_{\gamma \in Z_p} P_{k_1}  
(     \sum_{ w \in \{ \text{ polynomial roots} 
f(b,k_1) - t_1^A + \gamma = 0 \}  }
f(b,w) - t_1^A + \gamma = 0  )\,\,
 \,  P_{k_2}\,  \{k_2^C = \gamma \} \le 
$$ $$
\sum_{\gamma \in Z_p}   P_{\max}^{l+1}(P_{k_1}, \delta_1)  P^1_{\max}(P_{k_2},\delta_2) = |Z_p|  
  P_{\max}^{l+1}(P_{k_1} \delta_1)  P^1_{\max}(P_{k_2},\delta_2) . 
$$
From this and  the definition 2 of $ { \it Adv}[A,\bf{P]}  $ we obtain
 \begin{equation}\label{1-th}
 { \it Adv}[A,{\bf{P}]  }         \le  |Z_p|    
  P_{\max}^{l+1}(P_{k_1}, \delta_1)  P^1_{\max}(P_{k_2},\delta_2)  \, .
\end{equation}
Therefore, for the first case, the theorem is proved. 

Let us now turn to the analisis of the second case, which is in some ways more natural: $A$ when selecting a pair of $b, t_1^A$ takes into account the presence of $a, t_0$, 
i.e. when searching for $b$ and $t_1^A$ uses $a, t_0$ key limits. 
Use $Z_{a,t_0} $ to denote a set of key pairs $k_1, k_2$ for which $g(a,k_1, k_2) = t_0$. 
Then the probability of winning for $A$ is given by the expression
 $$
\forall b \in Z_p , \,\,
\forall    k_1^A, k_2^A \,  \text{for which}  \,\,g(a, k_1^A,k_2^A) = t_0 \ : \,
$$
$$
Pr\{   A \, \text{ wins  } \, \} = Pr\{t_1 = t_1^A \} =  Pr\{ g(b, k_1^C,k_2^C) = t_1^A \}= 
$$ $$Pr\{
\sum_{\gamma \in Z_{a,t_0}} f(b, k_1^C) = t_1^A - \gamma  \, \, \& \,\, k_2^C = \gamma \}.
$$
Now we can completely repeat the proof from the previous point with the replacement of $Z_p$ with $Z_{a,t_0} $ (starting with (\ref{forall})) and get a similar estimate
\begin{equation}\label{zz}
Pr\{   A \, \text{ wins } \, \}  \le  |Z_{a,t_0}|  
  P_{\max}^{l+1}(P_{k_1}, \delta_1)  P^1_{\max}(P_{k_2},\delta_2)  \, .
\end{equation}
Now let's formally prove a rather obvious fact: $ |Z_{a,t_0}| \le |Z_p|.$ Let's assume the opposite: $ |Z_{a,t_0}| > |Z_p|.$ Then there is (at least one) $k_1$ and a pair $k'_2, k_2''$ are such that $(k_1,k'_2) \in Z_{a,t_0} $ and $(k_1,k''_2) \in Z_{a,t_0} $. By definition $ Z_{a,t_0} $ $f(a,k_1) +k_2' = a$ and $f(a,k_1) + k_2'' = a$, which contradicts the assumption $k_2' \neq k_2''$.

From proven inequality $ |Z_{a,t_0}| \le |Z_p| $ and (\ref{zz}) we get (\ref{eps2}). Thus, (\ref{eps2}) is proved for both cases and the theorem is proved. 
\end{proof}

{\it Comment.} From the proof of the theorem, we see that the probability of ``guessing'' the correct pair $b, t_1$ is the same regardless of whether the $A$ knowledge of the pair takes into account $(a, t_0)$ or ignores it. In a sense, this is a further proof that the described system is ``perfect'' even with small deviations from the uniform distribution.

Let's take a closer look at the assessment from the statement.
Suppose  $\delta_i = \mu_i /|Z_p|$ where $\mu_i \ge 0,  i=1,2$.
Then, from Theorem 1 and  Lemma 1 we can see that 
$ { \it Adv}[A,{\bf{P}]  }    \le \epsilon  $, where $$\epsilon =  
     |Z_p| ( (\mu_1+ (l+1) )/ |Z_p| )  ( (\mu_2+ 1 )/ |Z_p| ) =  (\mu_1+ l+1) (\mu_2 +1) / |Z_p| \, .
  $$
Note that when $\delta_1 = 0, \delta_2 = 0$ (and, accordingly, $\mu_1 = \mu_2 = $0), the latter inequality coincides with the estimate is known for the case without deviations, when the keys follow the uniform distribution. 

\end{document}